\def\dref#1{(\ref{#1})}
\def\rm{\mathrm}
\newtheorem{theorem}{Theorem}
\newtheorem{lemma}{Lemma}
\newtheorem{remark}{Remark}
\begin{document}

\title{Consensus of Multi-Agent Systems
with General Linear and Lipschitz Nonlinear Dynamics Using Distributed Adaptive Protocols}
\author{ Zhongkui~Li, Wei~Ren,~\IEEEmembership{Member,~IEEE}, Xiangdong~Liu, and Mengyin~Fu%
\thanks{Z. Li, X. Liu and M. Fu are with the School of
Automation, Beijing Institute of Technology, Beijing 100081, China (E-mail: zhongkli@gmail.com).}
\thanks{W. Ren is with the Department of
Electrical and Computer Engineering, Utah State University, UT
84322, USA (E-mail: wei.ren@usu.edu).}}

\maketitle

\begin{abstract} This paper considers the distributed
consensus problems for multi-agent systems with general linear and
Lipschitz nonlinear dynamics.
Distributed relative-state consensus protocols with an adaptive
law for adjusting the coupling weights between neighboring agents
are designed for both the linear and nonlinear cases,
under which consensus is reached for all
undirected connected communication graphs. Extensions
to the case with a leader-follower communication graph are further studied.
In contrast to the existing results in the literature, the adaptive
consensus protocols here can be implemented by
each agent in a fully distributed fashion without using any global information.
\end{abstract}
\begin{keywords} Multi-agent system, consensus, adaptive law, Lipschitz nonlinearity
\end{keywords}

\section{Introduction}

In recent years, the consensus problem for multi-agent systems has
received compelling attention from various scientific communities,
for its potential applications in such broad areas as spacecraft
formation flying, sensor networks, and cooperative surveillance
\cite{olfati-saber2007consensus,ren2007information}. A general
framework of the consensus problem for networks of integrator agents
with fixed and switching topologies is addressed in
\cite{olfati-saber2004consensus}. The conditions given by
\cite{olfati-saber2004consensus} are further relaxed in
\cite{ren2005consensus}. A distributed algorithm is proposed in
\cite{cortes2008distributed} to achieve consensus in finite time.
Distributed $H_\infty$ consensus and control problems are
investigated in \cite{lin2008distributed,li2009h} for networks of
agents subject to external disturbances and model uncertainties.
Consensus algorithms are designed in
\cite{carli2009quantized,li2010distributed} for a group of agents
with quantized communication links and limited data rate. The authors in
\cite{rahmani2008controllability} studies the controllability of
leader-follower multi-agent systems from a graph-theoretic
perspective. To ensure that the states of a group of agents follow a
reference trajectory of a leader, consensus tracking algorithms are
given in \cite{hong2008distributed,ren2007multi-vehicle} for agents
with fixed and switching topologies. A passivity-based design
framework is proposed in \cite{arcak2007passivity} to achieve group
coordination. The consensus problems for networks of double- and
high-order integrators are studied in
\cite{ren2008consensus,ren2007high-order,jiang2009consensus}.
Readers are referred to the recent surveys
\cite{olfati-saber2007consensus,ren2007information} for a relatively
complete coverage of the literature on consensus.

This paper considers the distributed consensus problems for multi-agent systems
with general linear and Lipschitz nonlinear dynamics.
Consensus of multi-agent systems with general linear dynamics was previously studied in
\cite{li2010consensus,seo2009consensus,tuna2009conditions,scardovi2009synchronization}.
In particular, different static and dynamic consensus protocols are
designed in \cite{li2010consensus,seo2009consensus}, requiring the
smallest nonzero eigenvalue of the Laplacian matrix associated with the
communication graph to be known by each agent to determine the bound for
the coupling weight. However, the Laplacian matrix depends on the entire communication graph
and is hence global
information. In other
words, these consensus protocols in
\cite{li2010consensus,seo2009consensus} can not be computed and
implemented by each agent in a fully distributed fashion, i.e.,
using only local information of its own and neighbors. To tackle
this problem, we propose here a distributed consensus protocol based
on the relative states combined with an adaptive law for adjusting the
coupling weights between neighboring agents, which is partly
inspired by the edge-based adaptive strategy for the synchronization
of complex networks in
\cite{delellis2009novel,delellis2010synchronization}.

The proposed distributed adaptive protocols are designed, respectively,
for linear and Lipschitz nonlinear multi-agent systems, under which
consensus is reached in both the linear and the nonlinear cases
for any undirected connected communication graph.
It is shown that a sufficient condition for the
existence of such a protocol in the linear case
is that each agent is stabilizable. Existence conditions for the adaptive protocol
in the nonlinear case are also discussed. It is pointed out that
the results in the nonlinear case can be reduced to those in the linear
case, when the Lipschitz nonlinearity does not exist. Extensions
of the obtained results to the case with a leader-follower
communication graph are further discussed. It is
worth noting that the consensus protocols in
\cite{tuna2009conditions,scardovi2009synchronization} can also
achieve consensus for all connected communication graphs. Contrary
to the general linear and Lipschitz nonlinear agent dynamics in this paper,
the linear agent dynamics in \cite{tuna2009conditions} are restricted to be
neutrally stable and all the eigenvalues of
the state matrix of each agent in \cite{scardovi2009synchronization}
are assumed to lie in the closed left-half plane. 
In addition,
adaptive synchronization of a class of complex network satisfying a
Lipschitz-type condition is considered in \cite{delellis2009novel,delellis2010synchronization}.
However, the results given in \cite{delellis2009novel,delellis2010synchronization} require the
inner coupling matrix to be positive semi-definite, which is not
directly applicable to the consensus problem under
investigation here.

The rest of this paper is organized as follows. The adaptive
consensus problems for multi-agent systems with general linear and Lipschitz
nonlinear dynamics are considered, respectively, in Sections II and III.
Extensions to the case with a leader-follower communication graph
are studied in Section IV. Simulation examples are presented
to illustrate the analytical results in Section V. Conclusions are
drawn in Section VI.

Throughout this paper, the following notations will be used:
Let $\mathbf{R}^{n\times n}$
be the set of $n\times n$ real matrices. The superscript $T$ means
transpose for real matrices. $I_N$ represents the identity matrix of
dimension $N$. Matrices, if not explicitly stated, are assumed to
have compatible dimensions. Denote by $\mathbf{1}$ the column vector
with all entries equal to one. $\rm{diag}(A_1,\cdots,A_n)$
represents a block-diagonal matrix with matrices $A_i,i=1,\cdots,n,$
on its diagonal. For real symmetric matrices $X$
and $Y$, $X>(\geq)Y$ means that $X-Y$ is positive (semi-)definite.
$A\otimes B$ denotes the Kronecker product of matrices $A$ and $B$.

\section{Adaptive Consensus for Multi-Agent Systems with General
Linear Dynamics}

Consider a group of $N$ identical agents with general
linear dynamics. The dynamics of the $i$-th agent
are described by
\begin{equation}\label{1c}
\begin{aligned}
    \dot{x}_i &=Ax_i+Bu_i,\quad i=1,\cdots,N,
\end{aligned}
\end{equation}
where $x_i\in\mathbf{R}^n$ is the state,
$u_i\in\mathbf{R}^{p}$ is the control input,
and $A$, $B$, are constant matrices with
compatible dimensions.

The communication topology among the agents is represented by an
undirected graph $\mathcal {G}=(\mathcal {V},\mathcal {E})$, where
$\mathcal {V}=\{1,\cdots,N\}$ is the set of nodes (i.e., agents),
and $\mathcal {E}\subset\mathcal {V}\times\mathcal {V}$ is the set
of edges. An edge $(i,j)$ $(i\neq j)$ means that agents $i$ and $j$
can obtain information from each other. A path in $\mathcal {G}$
from node $i_1$ to node $i_l$ is a sequence of edges of the
form $(i_k, i_{k+1})$, $k=1,\cdots,l-1$. An undirected graph is
connected if there exists a path between every pair of distinct
nodes, otherwise is disconnected.

A variety of static and dynamic consensus protocols have been
proposed to reach consensus for agents with dynamics given by \dref{1c}, e.g., in
\cite{li2010consensus,seo2009consensus,tuna2009conditions,scardovi2009synchronization}.
For instance, a static consensus protocol based on the relative states
between neighboring agents is given in \cite{li2010consensus} as
\begin{equation}\label{clc1}
\begin{aligned}
u_i =cK\sum_{j=1}^Na_{ij}(x_i-x_j),\quad i=1,\cdots,N,
\end{aligned}
\end{equation}
where $c>0$ is the coupling weight among neighboring agents,
$K\in\mathbf{R}^{p\times n}$ is the feedback gain matrix, and
$a_{ij}$ is $(i,j)$-th entry of the adjacency matrix $\mathcal {A}$ associated with
$\mathcal {G}$, defined as $a_{ii}=0$, $a_{ij}=a_{ji}=1$ if
$(j,i)\in\mathcal {E}$ and $a_{ij}=a_{ji}=0$ otherwise. The
Laplacian matrix $\mathcal {L}=(\mathcal {L}_{ij})_{N\times N}$ of
$\mathcal {G}$ is defined by $\mathcal {L}_{ii}=\sum_{j=1,j\neq
i}^{N}a_{ij}$ and $\mathcal {L}_{ij}=-a_{ij}$ for $i\neq j$.

\begin{lemma}[\cite{li2010consensus}]\label{lem1}
Suppose that $\mathcal
{G}$ is connected. The $N$ agents described by \dref{1c} reach consensus
(i.e., $\lim_{t\rightarrow \infty}\|x_i(t)- x_j(t)\|=0$, $
\forall\,i,j=1,\cdots,N$)
under the protocol \dref{clc1} with $K=-B^TP^{-1}$ and the coupling
weight $c\geq \frac{1}{\lambda_2}$, where $\lambda_2$ is the
smallest nonzero eigenvalue of $\mathcal {L}$ and $P>0$ is a
solution to the following linear matrix inequality (LMI):
\begin{equation}\label{alg1}
AP+PA^T-2BB^T<0.
\end{equation}
\end{lemma}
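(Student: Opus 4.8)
The plan is to stack the agent dynamics, diagonalize the Laplacian, and thereby reduce consensus to the simultaneous Hurwitz stability of a finite family of matrices indexed by the eigenvalues of $\mathcal{L}$. First I would write the closed loop in compact form. Since $\sum_{j=1}^N a_{ij}(x_i-x_j)$ equals the $i$-th block of $(\mathcal{L}\otimes I_n)x$ with $x=[x_1^T,\cdots,x_N^T]^T$, the network obeys
$$\dot{x}=\left[(I_N\otimes A)+c(\mathcal{L}\otimes BK)\right]x,\qquad BK=-BB^TP^{-1}.$$

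Next, as $\mathcal{G}$ is undirected and connected, $\mathcal{L}$ is symmetric positive semidefinite with a simple zero eigenvalue and eigenvector $\mathbf{1}$; write its eigenvalues as $0=\lambda_1<\lambda_2\leq\cdots\leq\lambda_N$ and pick an orthogonal $U=[\tfrac{1}{\sqrt{N}}\mathbf{1},\,U_2]$ with $U^T\mathcal{L}U=\mathrm{diag}(0,\lambda_2,\cdots,\lambda_N)$. The change of variables $\tilde{x}=(U^T\otimes I_n)x$ block-diagonalizes the coupling and yields the decoupled modes
$$\dot{\tilde{x}}_1=A\tilde{x}_1,\qquad \dot{\tilde{x}}_k=(A-c\lambda_k BB^TP^{-1})\tilde{x}_k,\quad k=2,\cdots,N.$$
Here $\tilde{x}_1$ is the invariant (scaled average) motion along the consensus manifold $\{\mathbf{1}\otimes\xi:\xi\in\mathbf{R}^n\}$, while the disagreement between agents is captured entirely by $\tilde{x}_2,\cdots,\tilde{x}_N$. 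Consensus is therefore equivalent to $\tilde{x}_k(t)\to 0$ for every $k\geq 2$, i.e., to the Hurwitz stability of each $A-c\lambda_k BB^TP^{-1}$.

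The decisive step is to certify this stability from the LMI \dref{alg1}. Taking $P^{-1}$ as a common Lyapunov matrix, I would compute
$$(A-c\lambda_k BB^TP^{-1})^TP^{-1}+P^{-1}(A-c\lambda_k BB^TP^{-1})=A^TP^{-1}+P^{-1}A-2c\lambda_k P^{-1}BB^TP^{-1}.$$
A congruence transformation of \dref{alg1} by $P^{-1}$ gives $A^TP^{-1}+P^{-1}A-2P^{-1}BB^TP^{-1}<0$. Since $P^{-1}BB^TP^{-1}\geq 0$, and the bound $c\geq 1/\lambda_2$ together with $\lambda_k\geq\lambda_2$ forces $c\lambda_k\geq 1$, the term $-2c\lambda_k P^{-1}BB^TP^{-1}$ is no larger than $-2P^{-1}BB^TP^{-1}$, so the expression above is negative definite for all $k\geq 2$. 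Hence each disagreement mode decays to zero and consensus follows.

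I expect the only genuine subtlety to be the bookkeeping that a single $P$ solving \dref{alg1} simultaneously stabilizes all $N-1$ distinct eigenmodes; this is exactly what the monotonicity in $c\lambda_k$ and the sign definiteness of $P^{-1}BB^TP^{-1}$ provide, and it is the reason the scalar lower bound $c\geq 1/\lambda_2$ (rather than a separate per-mode condition) suffices.
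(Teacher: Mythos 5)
Your proof is correct. The paper itself gives no proof of Lemma~\ref{lem1} (it is quoted from \cite{li2010consensus}), but your argument --- stack the dynamics, diagonalize $\mathcal{L}$ by an orthogonal $U$ whose first column is $\mathbf{1}/\sqrt{N}$, reduce consensus to Hurwitz stability of $A-c\lambda_k BB^TP^{-1}$ for $k\geq 2$, and certify that via the congruence $P^{-1}(\cdot)P^{-1}$ of \dref{alg1} together with $c\lambda_k\geq 1$ --- is exactly the machinery the paper redeploys in the proof of Theorem~\ref{th1} (compare \dref{lyae2}--\dref{lyae3}, where the same decomposition and the choice $\alpha\lambda_i\geq 1$ appear inside a Lyapunov/LaSalle wrapper needed only because the weights there are adaptive). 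For the fixed-gain lemma your purely modal argument is complete as stated.
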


As shown in the above lemma, the coupling weight $c$ should be not
less than the inverse of the smallest nonzero eigenvalue $\lambda_2$
of $\mathcal {L}$ to reach consensus. The design method for the
dynamic protocol in \cite{seo2009consensus} depends on $\lambda_2$
also. However, $\lambda_2$ is global information in the sense that
each agent has to know the Laplacian matrix and hence the entire
communication graph $\mathcal {G}$ to compute it.
Therefore, the consensus protocols given in Lemma \ref{lem1} and
\cite{seo2009consensus} cannot be implemented by each
agent in a fully distributed fashion, i.e., using only the local
information of its own and neighbors.

In order to avoid the limitation stated as above,
we propose the following distributed consensus protocol
with an adaptive law for adjusting the coupling weights:
\begin{equation}\label{clc2}
\begin{aligned}
u_i &
=F\sum_{j=1}^Nc_{ij}a_{ij}(x_i-x_j),\\
\dot{c}_{ij}& =\kappa_{ij}a_{ij}(x_i-x_j)^T\Gamma(x_i-x_j),\quad
i=1,\cdots,N,
\end{aligned}
\end{equation}
where $a_{ij}$ is defined as in
\dref{clc1}, $\kappa_{ij}=\kappa_{ji}$ are positive
constants, $c_{ij}$ denotes the time-varying coupling weight
between agents $i$ and $j$ with $c_{ij}(0)=c_{ji}(0)$,
and $F\in\mathbf{R}^{p\times n}$ and
$\Gamma\in\mathbf{R}^{n\times n}$ are feedback gain matrices.

We next design $F$ and $\Gamma$ in \dref{clc2} such that
the $N$ agents reach consensus.
%
%

\begin{theorem}\label{th1}
For any given connected graph $\mathcal
{G}$, the $N$ agents described by \dref{1c} reach consensus under the
protocol \dref{clc2} with $F=-B^TP^{-1}$ and
$\Gamma=P^{-1}BB^TP^{-1}$, where $P>0$ is a solution to the LMI
\dref{alg1}. Moreover, each coupling weight $c_{ij}$ converges to some finite
steady-state value.
\end{theorem}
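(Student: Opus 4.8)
The plan is to establish consensus through a Lyapunov argument and then extract the convergence of the coupling weights as a separate consequence. First I would record two structural facts about the adaptive law. Since $\kappa_{ij}=\kappa_{ji}$, $a_{ij}=a_{ji}$, and $c_{ij}(0)=c_{ji}(0)$, the symmetry $c_{ij}(t)=c_{ji}(t)$ is preserved for all $t$; and because $\Gamma=P^{-1}BB^TP^{-1}\geq0$, each $\dot c_{ij}\geq0$, so every $c_{ij}$ is nondecreasing. Next I would pass to the disagreement variables $\xi_i=x_i-\frac1N\sum_{k=1}^N x_k$, observing that $x_i-x_j=\xi_i-\xi_j$, so the protocol and the adaptive law are unchanged when written in terms of $\xi$, while $\sum_i\xi_i=0$, i.e. $(\mathbf{1}^T\otimes I_n)\xi=0$ with $\xi=[\xi_1^T,\cdots,\xi_N^T]^T$.

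The core is the Lyapunov function
\[
V=\frac12\sum_{i=1}^N\xi_i^TP^{-1}\xi_i+\sum_{i=1}^N\sum_{j=1}^N\frac{a_{ij}(c_{ij}-\alpha)^2}{4\kappa_{ij}},
\]
where $\alpha$ is a constant used only in the analysis, to be fixed later; crucially $\alpha$ never enters the protocol, which is what keeps the design fully distributed. On differentiating, the correction term $\frac1N\sum_k(\cdots)$ in $\dot\xi_i$ drops out after summing against $\xi_i^TP^{-1}$ because $\sum_i\xi_i=0$; using $F=-B^TP^{-1}$ so that $P^{-1}BF=-\Gamma$, and symmetrizing over edges, the first part contributes $\frac12\xi^T(I_N\otimes(P^{-1}A+A^TP^{-1}))\xi-\frac12\sum_{i,j}c_{ij}a_{ij}(\xi_i-\xi_j)^T\Gamma(\xi_i-\xi_j)$. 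Substituting $\dot c_{ij}$ into the second part yields $\frac12\sum_{i,j}(c_{ij}-\alpha)a_{ij}(\xi_i-\xi_j)^T\Gamma(\xi_i-\xi_j)$, and the two weight-dependent sums cancel, leaving
\[
\dot V=\frac12\xi^T\bigl(I_N\otimes(P^{-1}A+A^TP^{-1})-2\alpha\,\mathcal{L}\otimes\Gamma\bigr)\xi.
\]
This cancellation, which renders $\dot V$ independent of the time-varying weights, is the crux of the argument.

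To sign $\dot V$, diagonalize $\mathcal{L}=U\Lambda U^T$ with $\Lambda=\mathrm{diag}(0,\lambda_2,\cdots,\lambda_N)$ and $U$ orthogonal with first column $\frac1{\sqrt N}\mathbf{1}$, and set $\tilde\xi=(U^T\otimes I_n)\xi$; then $\tilde\xi_1=0$, so $\dot V=\frac12\sum_{k=2}^N\tilde\xi_k^T(P^{-1}A+A^TP^{-1}-2\alpha\lambda_k\Gamma)\tilde\xi_k$. Pre- and post-multiplying each block by $P$, the condition $P^{-1}A+A^TP^{-1}-2\alpha\lambda_k\Gamma<0$ becomes $AP+PA^T-2\alpha\lambda_k BB^T<0$; since $\lambda_k\geq\lambda_2>0$, choosing $\alpha\geq1/\lambda_2$ gives $\alpha\lambda_k\geq1$, whence $AP+PA^T-2\alpha\lambda_k BB^T\leq AP+PA^T-2BB^T<0$ by the LMI \dref{alg1}. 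Thus $\dot V\leq-\mu\|\xi\|^2$ for some $\mu>0$.

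Finally I would harvest the two conclusions. Since $\dot V\leq0$, $V$ is nonincreasing and bounded, and as it is a sum of nonnegative terms, both $\xi$ and each $c_{ij}$ are bounded; the weights $c_{ij}$, being nondecreasing and bounded above, therefore converge to finite limits, which is the second claim. For consensus, integrating $\dot V\leq-\mu\|\xi\|^2$ gives $\xi\in L_2$; boundedness of $\xi$ and of the weights makes $\dot\xi$ bounded, so $\|\xi\|^2$ is uniformly continuous, and Barbalat's lemma forces $\xi\to0$, i.e. $\|x_i-x_j\|\to0$. I expect the delicate step to be this closing argument: $\dot V\leq0$ alone does not localize the weights (the invariant set is not a single point), so the monotonicity of $c_{ij}$ is needed for their convergence, and Barbalat's lemma rather than LaSalle invariance is needed to conclude $\xi\to0$.
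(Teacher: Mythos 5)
Your proof is correct and follows essentially the same route as the paper: the same disagreement variables, the same Lyapunov function built from $P^{-1}$ and the $(c_{ij}-\alpha)^2$ terms with an analysis-only constant $\alpha\geq 1/\lambda_2$, the same cancellation of the weight-dependent quadratic forms, and the same spectral decomposition of $\mathcal{L}$ to sign $\dot V$. The only deviation is the closing step, where you use the $L_2$ bound from $\dot V\leq-\mu\|\xi\|^2$ together with Barbalat's lemma instead of the paper's LaSalle invariance argument; both are valid, and yours avoids having to reason about invariant sets in the joint $(e,c_{ij})$ state space.
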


\begin{proof}
Let $\bar{x}=\frac{1}{N}\sum_{j=1}^{N} x_j$, $e_i=x_i-\bar{x}$, and
$e=[e_1^T,\cdots,e_N^T]^T$. Then, we get
$e=\left((I_N-\frac{1}{N}\mathbf{1}\mathbf{1}^T)\otimes I_n\right)x.$ It is easy
to see that $0$ is a simple eigenvalue of
$I_N-\frac{1}{N}\mathbf{1}\mathbf{1}^T$ with $\mathbf{1}$ as the
corresponding right eigenvector, and 1 is the other eigenvalue with
multiplicity $N-1$. Then, it follows that $e=0$ if and only if
$x_1=\cdots=x_N$. Therefore, the consensus problem under the protocol
\dref{clc2} can be reduced to the
asymptotical stability of $e$. Using \dref{clc2} for \dref{1c}, it
follows that $e$ satisfies the following
dynamics:
\begin{equation}\label{netce1}
\begin{aligned}
\dot{e}_i &= Ae_i+\sum_{j=1}^Nc_{ij}a_{ij}BF(e_i-e_j),\\
\dot{c}_{ij}& =\kappa_{ij}a_{ij}(e_i-e_j)^T\Gamma(e_i-e_j),\quad
i=1,\cdots,N.
\end{aligned}
\end{equation}

Consider the Lyapunov function candidate
\begin{equation}\label{lya1}
V_1=\sum_{i=1}^{N}e_i^TP^{-1}e_i+ \sum_{i=1}^{N}\sum_{j=1,j\neq
i}^{N}\frac{(c_{ij}-\alpha)^2}{2\kappa_{ij}},
\end{equation}
where $\alpha$ is a positive constant. The time derivative of $V_1$
along the trajectory of \dref{netce1} is given by
\begin{equation}\label{lyae}
\begin{aligned}
\dot{V}_1
&=2\sum_{i=1}^{N}e_i^TP^{-1}\dot{e}_i+\sum_{i=1}^{N}\sum_{j=1,j\neq i}^{N}\frac{c_{ij}-\alpha}{\kappa_{ij}}\dot{c}_{ij}\\
&=2\sum_{i=1}^{N}e_i^TP^{-1}\left(Ae_i+\sum_{j=1}^Nc_{ij}a_{ij}BF(e_i-e_j)\right)\\
    &\quad+\sum_{i=1}^{N}\sum_{j=1,j\neq i}^{N}(c_{ij}-\alpha)a_{ij}(e_i-e_j)^T\Gamma(e_i-e_j).
\end{aligned}
\end{equation}

Because $\kappa_{ij}=\kappa_{ji}$, $c_{ij}(0)=c_{ji}(0)$, and $\Gamma$ is symmetric,
it follows from \dref{clc2} that
$c_{ij}(t)=c_{ji}(t)$, $\forall\,t\geq 0$. Therefore, we have
\begin{equation}\label{equa1}
\begin{aligned}
&\sum_{i=1}^{N}\sum_{j=1,j\neq i}^{N}(c_{ij}-\alpha)a_{ij}(e_i-e_j)^T\Gamma(e_i-e_j)\\
&\quad\qquad=2\sum_{i=1}^{N}\sum_{j=1}^{N}(c_{ij}-\alpha)a_{ij}e_i^T\Gamma
(e_i-e_j).
\end{aligned}
\end{equation}
Let $\tilde{e}_i=P^{-1}e_i$ and
$\tilde{e}=[\tilde{e}_1^T,\cdots,\tilde{e}_N^T]^T$.
Substituting $F=-B^TP^{-1}$ and
$\Gamma=P^{-1}BB^TP^{-1}$ into \dref{lyae}, we can
obtain
\begin{equation}\label{lyae2}
\begin{aligned}
\dot{V}_1
&=2\sum_{i=1}^{N}e_i^TP^{-1}Ae_i-2\alpha\sum_{i=1}^{N}\sum_{j=1}^Na_{ij}e_i^TP^{-1}BB^TP^{-1}(e_i-e_j)\\
&=\sum_{i=1}^{N}\tilde{e}_i^T(AP+PA^T)\tilde{e}_i-2\alpha\sum_{i=1}^{N}\sum_{j=1}^{N}\mathcal {L}_{ij}\tilde{e}_i^TBB^T\tilde{e}_j\\
&=\tilde{e}^T\left(I_N\otimes(AP+PA^T)-2\alpha\mathcal {L}\otimes
BB^T\right)\tilde{e},
\end{aligned}
\end{equation}
where $\mathcal {L}$ is the Laplacian matrix associated with
$\mathcal {G}$.

Because $\mathcal {G}$ is connected, zero is a simple eigenvalue of
$\mathcal {L}$ and all the other eigenvalues are positive
\cite{godsil2001algebraic}. Let
$U\in\mathbf{R}^{N\times N}$ be such a unitary matrix that
$U^{T}\mathcal
{L}U=\Lambda\triangleq{\rm{diag}}(0,\lambda_2,\cdots,\lambda_N)$.
Because the right and left eigenvectors of $\mathcal {L}$
corresponding to the zero eigenvalue are ${\bf 1}$ and ${\bf 1}^T$,
respectively, we can choose $U=\left[\begin{smallmatrix}
\frac{\mathbf{1}}{\sqrt{N}} & Y_1
\end{smallmatrix}\right]$ and $U^T=\left[\begin{smallmatrix}
\frac{\mathbf{1}^T}{\sqrt{N}} \\ Y_2
\end{smallmatrix}\right]$, with $Y_1\in\mathbf{R}^{N\times(N-1)}$ and $Y_2\in\mathbf{R}^{(N-1)\times N}$.
Let
$\xi\triangleq[\xi_1^T,\cdots,\xi_N^T]^T=(U^T\otimes I_n)\tilde{e}$.
By the definitions of $e$ and $\tilde{e}$, it is easy to see that
\begin{equation}\label{pcz1}
\xi_1=(\frac{\mathbf{1}^T}{\sqrt{N}}\otimes I_n)\tilde{e}=(\frac{\mathbf{1}^T}{\sqrt{N}}\otimes P^{-1})e=0.
\end{equation}
Then, we have
\begin{equation}\label{lyae3}
\begin{aligned}
\dot{V}_1 & = \xi^T\left(I_N\otimes(AP+PA^T)-2\alpha\Lambda\otimes BB^T\right)\xi\\
&=\sum_{i=2}^{N}\xi_i^T\left(AP+PA^T-2\alpha\lambda_iBB^T\right)\xi_i.
\end{aligned}
\end{equation}
By choosing $\alpha$ sufficiently large such that
$\alpha\lambda_i\geq1$, $i=2,\cdots,N$, it follows from \dref{alg1}
that $$AP+PA^T-2\alpha\lambda_iBB^T\leq AP+PA^T-2BB^T<0.$$
Therefore, $\dot{V}_1\leq 0$.

Since $\dot{V}_1\leq 0$, $V_1(t)$ is bounded and so is each $c_{ij}$.
By noting $\Gamma\geq 0$, it can be seen from \dref{netce1}
that $c_{ij}$ is monotonically increasing.
Then, it follows that each coupling weight $c_{ij}$ converges to some finite value.
Let $S=\{\xi_i,c_{ij}|\dot{V}_1=0\}$. Note that $\dot{V}_1\equiv0$
implies that $\xi_i=0$, $i=2,\cdots,N$, which, by noticing that
$\xi_1\equiv0$ in \dref{pcz1}, further implies that $\tilde{e}=0$
and $e=0$. Hence, by LaSalle's Invariance principle
\cite{krstic1995nonlinear}, it follows that $e(t)\rightarrow 0$, as
$t\rightarrow \infty$. That is, the consensus problem is solved.
\end{proof}

\begin{remark}
Equation \dref{clc2} presents an adaptive protocol,
under which the agents with dynamics given by \dref{1c} can reach consensus for all
connected communication topologies. In contrast to the consensus
protocols in \cite{li2010consensus,seo2009consensus}, the adaptive
protocol \dref{clc2} can be computed and implemented by each agent
in a fully distributed way. As shown in \cite{li2010consensus}, a
necessary and sufficient condition for the existence of a $P>0$ to
the LMI \dref{alg1} is that $(A,B)$ is stabilizable. Therefore, a
sufficient condition for the existence of a protocol \dref{clc2}
satisfying Theorem 1 is that $(A,B)$ is stabilizable. 
\end{remark}

\begin{remark}
It is worth noting that the consensus
protocols in
\cite{tuna2009conditions,scardovi2009synchronization} can also
achieve consensus for all connected communication graphs. Contrary
to the general linear agent dynamics in this section,
the agent dynamics in \cite{tuna2009conditions} are restricted to be
neutrally stable and all the eigenvalues of
the state matrix of each agent in \cite{scardovi2009synchronization}
are assumed to lie in the closed left-half plane. 
\end{remark}

\section{Adaptive Consensus for Multi-Agent Systems with Lipschitz Nonlinearity}

In this section, we study the consensus problem for a group of $N$
identical nonlinear agents, described by
\begin{equation}\label{lip}
\begin{aligned}
\dot{x}_i =Ax_i+D_1f(x_i)+Bu_i, \qquad i=1,\cdots,N,
\end{aligned}
\end{equation}
where $x_i\in\mathbf{R}^n$, $u_i\in\mathbf{R}^{p}$ are the state and
the control input of the $i$-th agent, respectively, $A$, $B$,
$D_1$, are constant matrices with compatible dimensions, and the
nonlinear function $f(x_i)$ is assumed to satisfy the Lipschitz
condition with a Lipschitz constant $\gamma>0$, i.e.,
\begin{equation}\label{lipcon}
\|f(x)-f(y)\|\leq \gamma\|x-y\|,\quad \forall\,x,y\in\mathbf{R}^n.
\end{equation}


\begin{theorem}
Solve the following LMI:
\begin{equation}\label{liplmi}
\begin{bmatrix} AQ+QA^T- \tau BB^T+\gamma^2 D_1T D_1^T & Q
    \\Q & -T\end{bmatrix}<0,
\end{equation}
to get a matrix $Q>0$, a scalar $\tau>0$ and a diagonal matrix
$T>0$. Then, the $N$ agents described by \dref{lip} reach
global consensus under the protocol \dref{clc2} with $F=-B^TQ^{-1}$
and $\Gamma=Q^{-1}BB^TQ^{-1}$ for any connected communication graph
$\mathcal {G}$.
Furthermore, each coupling weight $c_{ij}$ converges to some finite
steady-state value.
\end{theorem}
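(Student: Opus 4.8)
The plan is to reproduce the proof of Theorem 1 almost verbatim, treating the Lipschitz term $D_1f(x_i)$ as a perturbation that the LMI \dref{liplmi} is designed to absorb through a Schur-complement argument. First I would pass to the disagreement coordinates $\bar{x}=\frac1N\sum_j x_j$, $e_i=x_i-\bar{x}$, just as before. Since the coupling inputs $\sum_j c_{ij}a_{ij}(x_i-x_j)$ are antisymmetric on each edge they cancel in $\dot{\bar x}$, giving $\dot{\bar x}=A\bar x+\frac1N\sum_k D_1f(x_k)$, and the error dynamics become
\[
\dot e_i = A e_i + \sum_{j=1}^N c_{ij}a_{ij}BF(e_i-e_j) + D_1\bigl(f(x_i)-\tfrac1N\textstyle\sum_k f(x_k)\bigr),
\]
with the same adaptive law $\dot c_{ij}=\kappa_{ij}a_{ij}(e_i-e_j)^T\Gamma(e_i-e_j)$ as in \dref{netce1}.

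Next I would take the Lyapunov candidate of \dref{lya1} with $P$ replaced by $Q$, namely $V_2=\sum_i e_i^TQ^{-1}e_i+\sum_i\sum_{j\neq i}\frac{(c_{ij}-\alpha)^2}{2\kappa_{ij}}$. The linear and adaptive parts of $\dot V_2$ are computed exactly as in \dref{lyae}--\dref{lyae3}: with $\tilde e_i=Q^{-1}e_i$ and the unitary $U$ diagonalizing $\mathcal L$ from the previous proof, they collapse to $\sum_{i=2}^N\xi_i^T(AQ+QA^T-2\alpha\lambda_iBB^T)\xi_i$, where $\xi=(U^T\otimes I_n)\tilde e$ and $\xi_1=0$ as in \dref{pcz1}. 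The only genuinely new contribution is the cross term $2\sum_i\tilde e_i^TD_1\bigl(f(x_i)-\frac1N\sum_k f(x_k)\bigr)$. Because $\sum_i\tilde e_i=Q^{-1}\sum_i e_i=0$ and the average nonlinearity is independent of $i$, that average drops out, and I may equally subtract $f(\bar x)$, so the cross term equals $2\sum_i\tilde e_i^TD_1(f(x_i)-f(\bar x))$.

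The heart of the argument, and the step I expect to be the main obstacle, is bounding this cross term so that it matches the structure of \dref{liplmi}. Using Young's inequality weighted by the diagonal matrix $T>0$ I would write
\[
2\tilde e_i^TD_1(f(x_i)-f(\bar x)) \le \gamma^2\tilde e_i^TD_1TD_1^T\tilde e_i + \tfrac{1}{\gamma^2}(f(x_i)-f(\bar x))^TT^{-1}(f(x_i)-f(\bar x)),
\]
and then invoke the Lipschitz condition \dref{lipcon}, $\|f(x_i)-f(\bar x)\|\le\gamma\|e_i\|$, to dominate the second term by $e_i^TT^{-1}e_i=\tilde e_i^TQT^{-1}Q\tilde e_i$; this is precisely where the diagonal structure of $T$ must be reconciled with the Euclidean Lipschitz bound, and lining these two up is the delicate point. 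Collecting everything and choosing $\alpha$ large enough that $2\alpha\lambda_i\ge\tau$ for all $i\ge2$ (possible since $\lambda_2>0$), I obtain
\[
\dot V_2 \le \sum_{i=2}^N\xi_i^T\bigl(AQ+QA^T-\tau BB^T+\gamma^2D_1TD_1^T+QT^{-1}Q\bigr)\xi_i.
\]
By the Schur complement with respect to the $-T$ block, the LMI \dref{liplmi} is exactly the assertion that the matrix in parentheses is negative definite, whence $\dot V_2\le0$.

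Finally I would close exactly as in Theorem 1. Since $\dot V_2\le0$, $V_2$ and hence every $c_{ij}$ is bounded; as $\Gamma\ge0$ makes each $c_{ij}$ monotonically nondecreasing, the coupling weights converge to finite limits. Moreover $\dot V_2\equiv0$ forces $\xi_i=0$ for $i\ge2$, which with $\xi_1\equiv0$ gives $\tilde e=0$ and $e=0$, so LaSalle's invariance principle yields $e(t)\to0$, i.e. global consensus is achieved.
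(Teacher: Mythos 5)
Your proposal follows the paper's proof essentially step for step: the same disagreement coordinates, the same Lyapunov function (the paper writes $c_{ij}=\tilde c_{ij}+\beta$ and penalizes $\tilde c_{ij}^2$, which is your $(c_{ij}-\alpha)^2$ up to renaming), the same observation that the averaged nonlinearity drops out because $\sum_i e_i=0$, the same $T$-weighted Cauchy--Schwarz/Young estimate producing $\gamma^2 D_1TD_1^T+QT^{-1}Q$, and the same Schur-complement reading of \dref{liplmi}. The ``delicate point'' you flag is handled in the paper by exactly the inequality you propose (its \dref{lipc2}), so no new idea is needed there. The one substantive deviation is your closing step: you invoke the classical LaSalle invariance principle, but in the nonlinear case the $(e,c)$ subsystem is nonautonomous --- the error dynamics contain $f(x_i)=f(e_i+\bar x(t))$ with $\bar x(t)$ an exogenous, possibly unbounded signal --- so the invariance principle does not apply as stated (and applying it to the full autonomous state $(x,c)$ fails for lack of precompact trajectories, since the common trajectory $\bar x$ need not remain bounded). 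The paper instead uses the LaSalle--Yoshizawa theorem: from $\dot V_2\le W(\zeta)\le 0$ with $W$ continuous and $V_2$ positive definite and radially unbounded in the $(e,\tilde c)$ variables, one concludes $W(\zeta(t))\to 0$, hence $\zeta_i\to 0$ for $i\ge 2$, which together with $\zeta_1\equiv 0$ gives $e\to 0$. Replacing your last sentence with this (or an equivalent Barbalat-type argument) closes the proof; everything else coincides with the paper's argument.
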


\begin{proof}
Using \dref{clc2} for \dref{lip}, we obtain the closed-loop network
dynamics as
\begin{equation}\label{netlip}
\begin{aligned}
\dot{x}_i &=Ax_i+D_1f(x_i)+\sum_{j=1}^{N}c_{ij}a_{ij}BF(x_i-x_j),\\
\dot{c}_{ij}& =\kappa_{ij}a_{ij}(x_i-x_j)^T\Gamma(x_i-x_j),\quad
i=1,\cdots,N.
\end{aligned}
\end{equation}
As argued in the Proof of Theorem 1, it follows that
$c_{ij}(t)=c_{ji}(t)$, $\forall\,t\geq 0$.

Letting $\bar{x}=\frac{1}{N}\sum_{j=1}^{N} x_j$,
$e_i=x_i-\bar{x}$, and $e=[e_1^T,\cdots,e_N^T]^T$, we get
$e=\left((I_N-\frac{1}{N}\mathbf{1}\mathbf{1}^T)\otimes
I_n\right)x.$ By following similar steps to those in Theorem 1, we
can reduce the consensus problem of \dref{netlip} to the
convergence of $e$ to the origin. It is easy to obtain that $e$ satisfies the following
dynamics:
\begin{equation}\label{netlip2}
\begin{aligned}
\dot{e}_i
&=Ae_i+D_1f(x_i)-\frac{1}{N}\sum_{j=1}^{N}D_1f(x_j)+\sum_{j=1}^{N}(\tilde{c}_{ij}+\beta)a_{ij}BF(e_i-e_j),\\
\dot{\tilde{c}}_{ij}& =\kappa_{ij}a_{ij}(e_i-e_j)^T\Gamma(e_i-e_j),\quad
i=1,\cdots,N,
\end{aligned}
\end{equation}
where $c_{ij}=\tilde{c}_{ij}+\beta$ and $\beta$ is a positive constant.

Consider the Lyapunov function candidate
\begin{equation}\label{lyalip}
V_2=\sum_{i=1}^{N}e_i^TQ^{-1}e_i+ \sum_{i=1}^{N}\sum_{j=1,j\neq
i}^{N}\frac{\tilde{c}_{ij}^2}{2\kappa_{ij}}.
\end{equation}
The time derivative of $V_2$
along the trajectory of \dref{netlip2} is
\begin{equation}\label{lyalip2}
\begin{aligned}
\dot{V}_2
&=2\sum_{i=1}^{N}e_i^TQ^{-1}\dot{e}_i+\sum_{i=1}^{N}\sum_{j=1,j\neq
i}^{N}\frac{\tilde{c}_{ij}}{\kappa_{ij}}\dot{\tilde{c}}_{ij}\\
&=2\sum_{i=1}^{N}e_i^TQ^{-1}\left(Ae_i+D_1f(x_i)-\frac{1}{N}\sum_{j=1}^{N}D_1f(x_j)+\sum_{j=1}^N(\tilde{c}_{ij}+\beta)a_{ij}BF(e_i-e_j)\right)\\
    &\quad+\sum_{i=1}^{N}\sum_{j=1,j\neq
    i}^{N}\tilde{c}_{ij}a_{ij}(e_i-e_j)^T\Gamma(e_i-e_j)\\
&=2\sum_{i=1}^{N}e_i^TQ^{-1}Ae_i-2\beta\sum_{i=1}^{N}\sum_{j=1}^N\mathcal {L}_{ij}e_i^TQ^{-1}BB^TQ^{-1}e_j\\
&\quad +2\sum_{i=1}^{N}e_i^TQ^{-1}D_1\left(
f(x_i)-f(\bar{x})+f(\bar{x})-\frac{1}{N}\sum_{j=1}^{N}f(x_j)\right),
\end{aligned}
\end{equation}
where we have used the fact \dref{equa1} to get the
last equation.

Using the Lipschitz condition \dref{lipcon} gives
\begin{equation}\label{lipc2}
\begin{aligned}
2e_i^TQ^{-1}D_1(f(x_i)-f(\bar{x}))&\leq
2\gamma\|e_i^TQ^{-1}D_1T^{\frac{1}{2}}\|\|T^{-\frac{1}{2}}e_i\|
\\&\leq e_i^T(\gamma^2Q^{-1}D_1T D_1^TQ^{-1}+T^{-1})e_i,
\end{aligned}
\end{equation}
where $T$ is given in \dref{liplmi}. Because
$\sum_{i=1}^Ne_i^T=0$, we have
\begin{equation}\label{lipc3}
\sum_{i=1}^Ne_i^TQ^{-1}D_1\left(f(\bar{x})-\frac{1}{N}\sum_{j=1}^{N}f(x_j)\right)=0.
\end{equation}
Let $\hat{e}_i=Q^{-1}e_i$ and
$\hat{e}=[\hat{e}_1^T,\cdots,\hat{e}_N^T]^T$. In virtue of
\dref{lipc2} and \dref{lipc3}, we can obtain from \dref{lyalip2} that
\begin{equation}\label{lyalip3}
\begin{aligned}
\dot{V}_2
&\leq\sum_{i=1}^{N}\hat{e}_i^T\left((AQ+QA^T+\gamma^2D_1T
D_1^T+QT^{-1}Q)\hat{e}_i-2\beta\sum_{j=1}^{N}\mathcal
{L}_{ij}BB^T\hat{e}_j\right)\\
&=\hat{e}^T\left(I_N\otimes (AQ+QA^T+\gamma^2D_1T
D_1^T+QT^{-1}Q)-2\beta\mathcal {L}\otimes BB^T\right)\hat{e}.
\end{aligned}
\end{equation}
Let $U\in\mathbf{R}^{N\times N}$ be the unitary matrix defined in
the proof of Theorem 1, satisfying $U^{T}\mathcal
{L}U=\Lambda={\rm{diag}}(0,\lambda_2,\cdots,\lambda_N)$. Let
$\zeta\triangleq[\zeta_1^T,\cdots,\zeta_N^T]^T=(U^T\otimes
I_n)\hat{e}$. Clearly,
$\zeta_1=(\frac{\mathbf{1}^T}{\sqrt{N}}\otimes Q^{-1})e=0$. From
\dref{lyalip3}, we have
\begin{equation}\label{lyalip4}
\begin{aligned}
\dot{V}_2
&\leq \zeta^T\left(I_N\otimes (AQ+QA^T+\gamma^2D_1T D_1^T+QT^{-1}Q)-2\beta\Lambda\otimes BB^T\right)\zeta\\
&=\sum_{i=2}^{N}\zeta_i^T\left(AQ+QA^T+\gamma^2D_1T D_1^T+QT^{-1}Q-2\beta\lambda_iBB\right)\zeta_i\\
&\triangleq W(\zeta).
\end{aligned}
\end{equation}
By choosing $\beta$ sufficiently large such that
$2\beta\lambda_i\geq\tau$, $i=2,\cdots,N$, it follows that
$$
\begin{aligned}
&AQ+QA^T-2\beta\lambda_i BB^T+\gamma^2D_1T D_1^T+QT^{-1}Q\\
&\qquad \qquad\leq AQ+QA^T-\tau BB^T+\gamma^2D_1T
D_1^T+QT^{-1}Q\\
&\qquad \qquad<0,\quad i=2,\cdots,N,
\end{aligned}
$$
where the last inequality follows from \dref{liplmi} by using the
Schur complement lemma \cite{boyd1994linear}. Therefore,
$W(\zeta)\leq0$.

Since $\dot{V}_2\leq 0$, $V_2(t)$ is bounded and so is each $\tilde{c}_{ij}$.
By \dref{netlip2}, $\tilde{c}_{ij}$ is monotonically increasing. Then, it follows that each
$\tilde{c}_{ij}$ converges to some finite value. Thus the coupling weights $c_{ij}$
converge to finite steady-state values.
Note that $V_2$ is positive definite and radically unbounded.
By LaSalle-Yoshizawa theorem \cite{krstic1995nonlinear}, it
follows that $\lim_{t\rightarrow \infty}W(\zeta)=0$, implying
that $\zeta_i\rightarrow 0$, $i=2,\cdots,N$, as $t\rightarrow \infty$,
which, together with $\zeta_1\equiv0$, further
implies that $e(t)\rightarrow 0$, as $t\rightarrow \infty$. This completes the proof.
\end{proof}

\begin{remark}
By using Finsler's Lemma
\cite{iwasaki1994all}, it is not difficult to see that there exist a
$Q>0$, a $T>0$ and a $\tau>0$ such that \dref{liplmi} holds if
and only if there exists a $K$ such that
$(A+BK)Q+Q(A+BK)^T+\gamma^2D_1T D_1^T+QT^{-1}Q<0$, which
with $T=I$ is dual to the observer design problem for a single
Lipschitz system in \cite{rajamani1998observers,rajamani1998existence}.
According to Theorem 2 in \cite{rajamani1998existence},
the LMI \dref{liplmi} is feasible, and thus there exists an adaptive protocol
\dref{clc2} reaching consensus, if the distance to unobservability of
$(A,B)$ is larger than $\gamma$. Besides, a
diagonal scaling matrix $T>0$ is introduced here in
\dref{liplmi} to reduce conservatism. If the nonlinear function
$f(x_i)=0$ in \dref{lip}, then \dref{lip} becomes \dref{1c}. By
choosing $T$ sufficiently large and letting $D_1=0$ and
$\tau=2$, then \dref{liplmi} becomes $AQ+QA^T-2 BB^T<0$. Therefore,
for the case without the Lipschitz nonlinearity, Theorem 2 is
reduced to Theorem 1.
\end{remark}

\begin{remark}
It should be mentioned that the adaptive law
in \dref{clc2} for adjusting the coupling weights is inspired by the
edge-based adaptive strategy in \cite{delellis2009novel,delellis2010synchronization},
where adaptive synchronization of a class of complex network satisfying a
Lipschitz-type condition is considered.
However, the results given in \cite{delellis2009novel,delellis2010synchronization} require the
inner coupling matrix to be positive semi-definite, and are thereby not
directly applicable to the consensus problem under
investigation here.
\end{remark}

\section{Extensions}

The communication topology is assumed to be undirected in the previous
sections, where the final consensus value reached by the agents is
generally not explicitly known, due to the nonlinearity in
the closed-loop network dynamics. In many practical cases, it is
desirable for the agents' states to asymptotically approach a
reference state. In this section, we consider the case where a
network of $N+1$ agents maintains a leader-follower communication
structure.

The agents' dynamics remain the same as in \dref{1c}. The agents
indexed by $1,\cdots,N$, are referred to as followers, while the
agent indexed by 0 is called the virtual leader whose control input
$u_0=0$. The communication topology among the $N$ followers is
represented by an undirected graph $\mathcal {G}$. It is assumed
that the leader receives no information from any follower and the
state of the leader is available to only a subset of the followers
(without loss of generality, the first $q$ followers). In this case,
the following distributed consensus protocol is proposed
\begin{equation}\label{clf}
\begin{aligned}
u_i &
=\hat{F}\left(\sum_{j=1}^Nc_{ij}a_{ij}(x_i-x_j)+c_id_i(x_i-x_0)\right),\\
\dot{c}_{ij}& =\kappa_{ij}a_{ij}(x_i-x_j)^T\hat{\Gamma}(x_i-x_j),\\
\dot{c}_{i}& =\kappa_{i}d_i(x_i-x_0)^T\hat{\Gamma}(x_i-x_0),\quad
i=1,\cdots,N,
\end{aligned}
\end{equation}
where $a_{ij}$, $c_{ij}$, $\kappa_{ij}$ are defined as in
\dref{clc2}, $c_{i}$ denotes the coupling weight between agent $i$
and the virtual leader, $\kappa_{i}$ are positive constants,
$\hat{F}\in\mathbf{R}^{p\times n}$ and
$\hat{\Gamma}\in\mathbf{R}^{n\times n}$ are feedback gain matrices,
and $d_i$ are constant gains, satisfying $d_i>0$, $i=1,\cdots,q,$ and $d_i=0$,
$i=q+1,\cdots,N$.

The objective here is to design $\hat{F}\in\mathbf{R}^{p\times n}$
and $\hat{\Gamma}\in\mathbf{R}^{n\times n}$ such that the states of
the followers can asymptotically approach the state of the leader in
the sense that $\lim_{t\rightarrow \infty}\|x_i(t)- x_0(t)\|=0$,
$\forall\,i=1,\cdots,N.$

\begin{theorem}
Assume that $\mathcal {G}$ is connected and
at least one follower can have access to the leader's state. Then,
the states of the $N$ followers asymptotically approach the state of
the leader, under the protocol \dref{clf} with $\hat{F}=-B^TP^{-1}$
and $\hat{\Gamma}=P^{-1}BB^TP^{-1}$, where $P>0$ is a solution to
\dref{alg1}, and the coupling weights $c_{ij}$ and $c_{i}$ converge
to finite values.
\end{theorem}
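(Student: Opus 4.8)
The plan is to mirror the proof of Theorem~1, the essential new ingredient being that the leader supplies an \emph{absolute} reference, so no average-subtraction is needed and the relevant matrix is positive definite rather than merely positive semidefinite. First I would pass to the tracking error $\epsilon_i=x_i-x_0$. Since $u_0=0$ and the leader obeys the same dynamics \dref{1c}, one has $\dot\epsilon_i=A\epsilon_i+Bu_i$; moreover $x_i-x_j=\epsilon_i-\epsilon_j$ and $x_i-x_0=\epsilon_i$, so the input $u_i$ in \dref{clf} rewrites purely in terms of the $\epsilon$'s. The closed loop thus becomes an autonomous system in $(\epsilon,\{c_{ij}\},\{c_i\})$, and the goal $\|x_i-x_0\|\to 0$ is exactly $\epsilon\to 0$. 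As in Theorem~1, the symmetry $\kappa_{ij}=\kappa_{ji}$, $c_{ij}(0)=c_{ji}(0)$ and $\hat\Gamma=\hat\Gamma^T$ gives $c_{ij}(t)=c_{ji}(t)$. Note there is no zero mode to quotient out here, which is why genuine tracking (not merely inter-follower agreement) will follow.

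Next I would take the Lyapunov candidate
\[ V_3=\sum_{i=1}^N \epsilon_i^T P^{-1}\epsilon_i+\sum_{i=1}^N\sum_{j=1,j\neq i}^N\frac{(c_{ij}-\alpha)^2}{2\kappa_{ij}}+\sum_{i=1}^N\frac{(c_i-\alpha)^2}{\kappa_i}, \]
with $\alpha>0$ to be fixed later. Substituting $\hat F=-B^TP^{-1}$, $\hat\Gamma=P^{-1}BB^TP^{-1}$ and writing $\tilde\epsilon_i=P^{-1}\epsilon_i$, the adaptive terms cancel the $c_{ij}$- and $c_i$-dependent feedback terms exactly as in \dref{lyae2}, using the identity \dref{equa1} for the edge part and its single-index analogue for the leader part. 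The bookkeeping of the factor~$2$ between the (doubly counted) edge weights and the (singly counted) leader weights is the one place to be careful; with the coefficients above one is left with
\[ \dot V_3=\tilde\epsilon^T\!\left(I_N\otimes(AP+PA^T)-2\alpha\,\mathcal{H}\otimes BB^T\right)\tilde\epsilon, \]
where $\mathcal{H}=\mathcal{L}+\mathcal{D}$ and $\mathcal{D}=\mathrm{diag}(d_1,\dots,d_N)$ (if one instead keeps the $\tfrac{1}{2\kappa_i}$ normalization, a non-positive remainder $-\sum_i c_i d_i\tilde\epsilon_i^T BB^T\tilde\epsilon_i$ is left over and the governing matrix becomes $2\mathcal{L}+\mathcal{D}=\mathcal{L}+\mathcal{H}$, still positive definite).

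The crux is to show $\mathcal{H}>0$. Since $\mathcal{G}$ is connected, $\mathcal{L}\ge 0$ with kernel $\mathrm{span}\{\mathbf 1\}$, and $\mathcal{D}\ge 0$; if $x^T\mathcal{H}x=0$ then $x^T\mathcal{L}x=0$ forces $x=c\mathbf 1$ while $x^T\mathcal{D}x=0$ forces $x_i=0$ wherever $d_i>0$, and since at least one $d_i>0$ this gives $c=0$, hence $x=0$. Thus $\mathcal{H}$ is symmetric positive definite with eigenvalues $0<\mu_1\le\cdots\le\mu_N$. Diagonalizing $\mathcal{H}=U\Lambda U^T$ and setting $\eta=(U^T\otimes I_n)\tilde\epsilon$ yields $\dot V_3=\sum_{i=1}^N\eta_i^T(AP+PA^T-2\alpha\mu_i BB^T)\eta_i$; choosing $\alpha$ with $\alpha\mu_i\ge 1$ for \emph{all} $i=1,\dots,N$ (possible precisely because no $\mu_i$ vanishes) makes each block $\le AP+PA^T-2BB^T<0$ by \dref{alg1}, so $\dot V_3\le 0$.

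Finally, boundedness of $V_3$ together with the monotone increase of $c_{ij},c_i$ (as $\hat\Gamma\ge0$) gives their convergence to finite limits, and $\dot V_3\equiv 0$ forces every $\eta_i=0$, hence $\tilde\epsilon=0$ and $\epsilon=0$; LaSalle's invariance principle then yields $\epsilon(t)\to0$, i.e.\ $x_i\to x_0$. I expect the main obstacle to be establishing $\mathcal{H}>0$ and threading the factor-$2$ bookkeeping so that $\dot V_3$ collapses to the clean $\mathcal{H}$-quadratic form; the remainder is essentially a transcription of Theorem~1 with the zero eigenvalue removed.
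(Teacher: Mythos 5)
Your proposal is correct and follows essentially the same route as the paper: the same error variable $\varepsilon_i=x_i-x_0$, the same Lyapunov function (including the $\kappa_i$ versus $2\kappa_{ij}$ normalization that makes the leader terms collapse cleanly), and the same reduction to positive definiteness of $\mathcal{L}+\mathrm{diag}(d_1,\dots,d_N)$, followed by the Theorem~1 argument with the zero eigenvalue removed. The only difference is cosmetic: you prove the key fact $\mathcal{L}+\mathcal{D}>0$ directly, whereas the paper cites it from the literature.
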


\begin{proof}
 Let $\varepsilon_i=x_i-x_0$, $i=1,\cdots,N$.
Then, the collective network dynamics resulting from \dref{1c} and
\dref{clf} can be written as
\begin{equation}\label{netcd1f}
\begin{aligned}
\dot{\varepsilon}_i &= A\varepsilon_i+B\hat{F}\left(\sum_{j=1}^Nc_{ij}a_{ij}(\varepsilon_i-\varepsilon_j)+c_id_i\varepsilon_i\right),\\
\dot{c}_{ij}&
=\kappa_{ij}a_{ij}(\varepsilon_i-\varepsilon_j)^T\hat{\Gamma}(\varepsilon_i-\varepsilon_j),\\
\dot{c}_{i}&
=\kappa_{i}d_i\varepsilon_i^T\hat{\Gamma}\varepsilon_i,\quad
i=1,\cdots,N.
\end{aligned}
\end{equation}
Clearly, the states of the followers under \dref{clf} can
asymptotically approach the state of the leader, if \dref{netcd1f}
is asymptotically stable.

Consider the Lyapunov function candidate
\begin{equation}\label{lyaf1}
V_3=\sum_{i=1}^{N}\varepsilon_i^TP^{-1}\varepsilon_i+
\sum_{i=1}^{N}\sum_{j=1,j\neq
i}^{N}\frac{(c_{ij}-\beta)^2}{2\kappa_{ij}}+\sum_{i=1}^{N}\frac{(c_{i}-\beta)^2}{\kappa_{i}},
\end{equation}
where $\beta$ is a positive constant. The rest of the proof follows
similar steps to those in Theorem 1, and by further noting
the fact: Suppose that $R=\rm{diag}(d_1,d_2,\cdots,d_N)\geq 0$ with
at least one diagonal item being positive. Then, $\mathcal {L}+R$ is
positive definite if $\mathcal {G}$ is connected
\cite{hong2008distributed}.
\end{proof}

\begin{remark}
It is worth mentioning that an adaptive pinning scheme
similar to \dref{clf} has been proposed in \cite{delellis2010fully}.
Compared to \cite{delellis2010fully} where
the inner coupling matrix is an identity matrix,
the adaptive protocol \dref{clf} here is more general.
\end{remark}

The case with the agents described by \dref{lip} can be discussed
similarly, and is thus omitted here for brevity.

\section{Simulation Examples}

In this section, a simulation example is provided to validate the
effectiveness of the theoretical results.

\begin{figure}[htbp]
\centering
\includegraphics[width=0.28\linewidth]{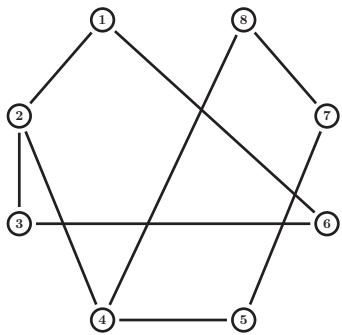}
\caption{The communication topology. }
\end{figure}

Consider a network of single-link manipulators with revolute joints
actuated by a DC motor. The dynamics of the $i$-th manipulator is
described by \dref{lip}, with (see \cite{rajamani1998existence})
$$\begin{aligned}
x_{i} &=\begin{bmatrix}
x_{i1}\\x_{i2}\\x_{i3}\\x_{i4}\end{bmatrix},\quad
A=\begin{bmatrix}0 & 1& 0 &0\\ -48.6 & -1.25 & 48.6 & 0\\
0 &0 & 0 & 10\\ 1.95 & 0 & -1.95 &0\end{bmatrix},\quad B
=\begin{bmatrix} 0 \\ 21.6 \\ 0 \\ 0\end{bmatrix},
\\D_1&=I_4, \quad f(x_i)=\begin{bmatrix} 0 & 0& 0&
-0.333\mathrm{sin}(x_{i3})\end{bmatrix}^T.
\end{aligned}$$
Clearly, $f(x_i)$ here satisfies \dref{lipcon} with a Lipschitz
constant $\gamma=0.333$.

Solving the LMI \dref{liplmi} by using the LMI toolbox of Matlab
gives the feedback gain matrices in \dref{clc2} as
$$\begin{aligned}
F &=\begin{bmatrix} -1.8351 &  -0.2144  &  1.0309 &  -2.247\end{bmatrix},\\
\Gamma &=\left[\begin{matrix} 3.3676  &  0.3935 &  -1.8917 &
4.1236\\ 0.3935  &  0.046  & -0.221 &   0.4818\\
-1.8917 &  -0.221  &  1.0627  & -2.3164\\
4.1236  &  0.4818  & -2.3164  &  5.0492\end{matrix}\right].
\end{aligned}$$
To illustrate Theorem 2, let the communication graph $\mathcal {G}$
be given in Fig. 1. Here $\mathcal {G}$ is undirected and connected.
Let $\kappa_{ij}=1$, $i,j=1,\cdots,8$, in \dref{clc2}, and
$c_{ij}(0)=c_{ji}(0)$ be randomly chosen. The states trajectories of
the eight manipulators under the protocol \dref{clc2} are depicted
in Fig. 2, from which it can be observed that consensus is
reached. The coupling weights $c_{ij}$ are shown in Fig. 3, which
converge to finite steady-state values.

\begin{figure}[htbp]\centering
\centering
\includegraphics[height=0.25\linewidth,width=0.4\linewidth]{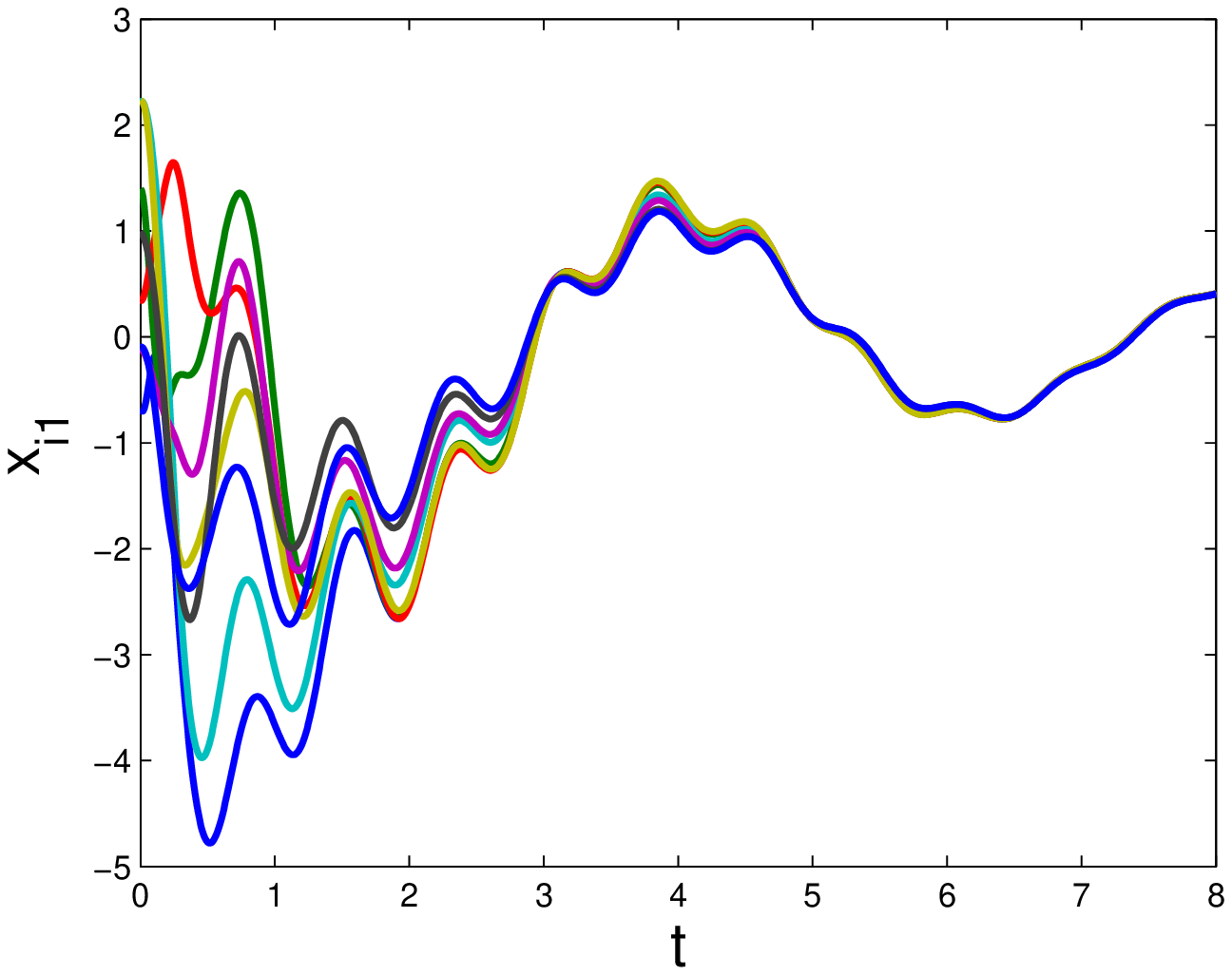}\quad
\includegraphics[height=0.25\linewidth,width=0.4\linewidth]{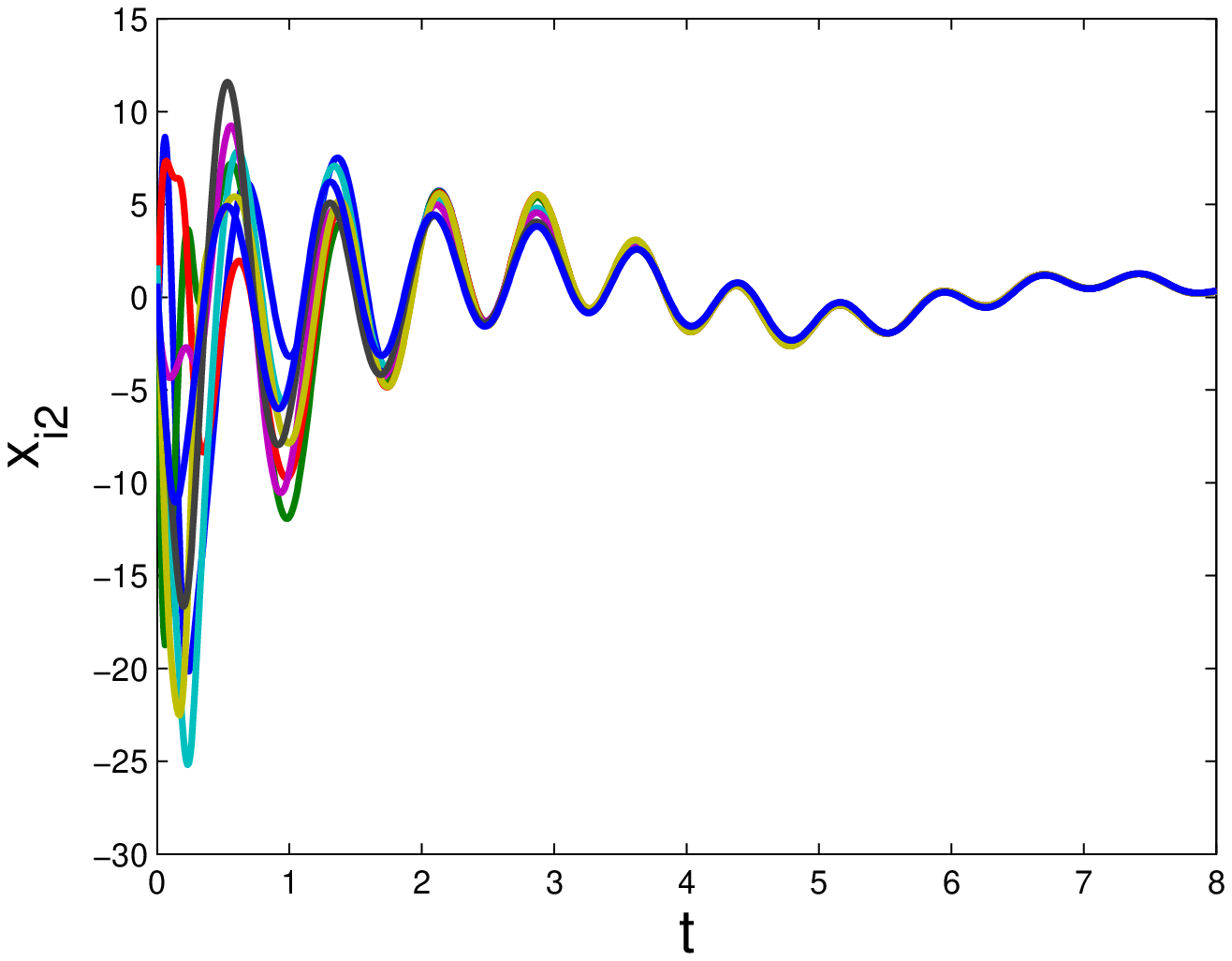}\\
\includegraphics[height=0.25\linewidth,width=0.4\linewidth]{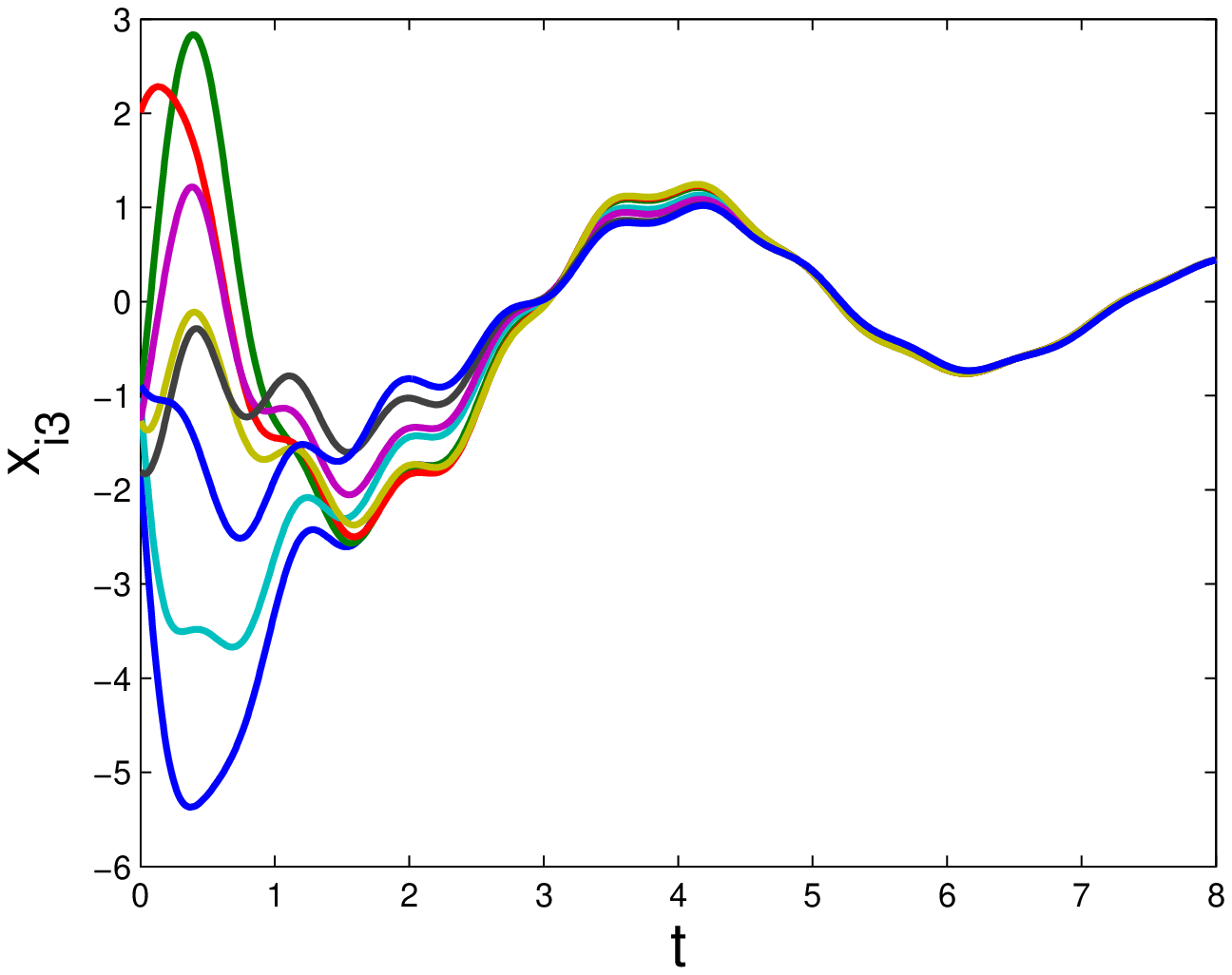}\quad
\includegraphics[height=0.25\linewidth,width=0.4\linewidth]{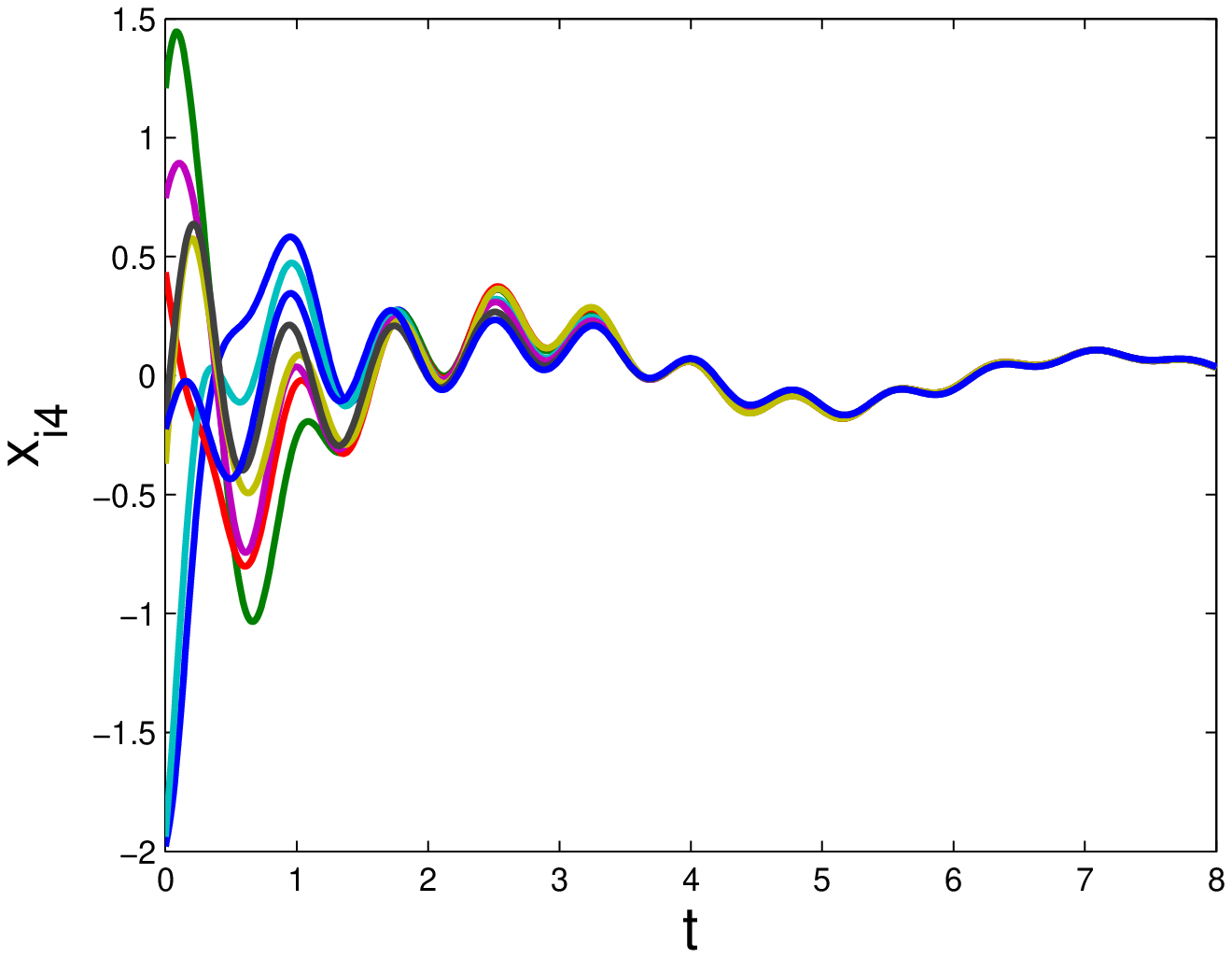}
\caption{The states of the eight manipulators under \dref{clc2}.}
\end{figure}

\begin{figure}[htbp]\centering
\centering
\includegraphics[height=0.28\linewidth,width=0.5\linewidth]{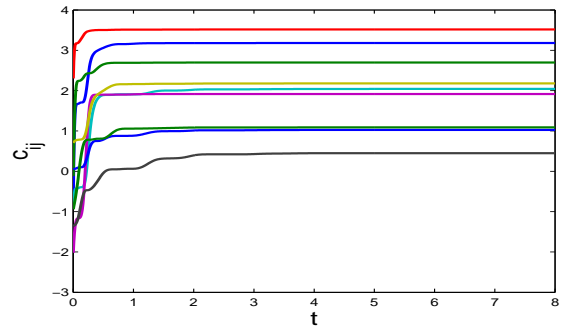}
\caption{The coupling weights $c_{ij}$.}
\end{figure}

\section{Conclusion}

In this paper, the distributed consensus problems have been considered
for multi-agent systems with general linear and
Lipschitz nonlinear dynamics.
Distributed relative-state consensus protocols with an adaptive
law for adjusting the coupling weights between neighboring agents
are designed for both the linear and nonlinear cases,
under which consensus is reached for all
undirected connected communication graphs. Extensions
to the case with a leader-follower communication graph have also been studied.

%
%

%
%

\end{document}